\newtheorem{thm}{Theorem}[]
\newtheorem{lem}{Lemma}[]
\newtheorem{definition}{Definition} []
\newtheorem{rem}{Remark}[]
\begin{document}

\title{{Link-State Based Decode-Forward Schemes for Two-way Relaying}}

\author{\IEEEauthorblockN{Lisa Pinals and Mai Vu}
  \IEEEauthorblockA{Department of Electrical \& Computer Engineering,
    Tufts University, Medford, MA, USA\\
    Email: lisa.pinals@tufts.edu, maivu@ece.tufts.edu}
}

\maketitle

\begin{abstract}
%\boldmath
In this paper, we analyze a composite decode-and-forward scheme for the two-way relay channel with a direct link. During transmission, our scheme combines both block Markov coding and an independent coding scheme similar to network coding at the relay. The main contribution of this work is to examine how link state impacts the allocation of power between these two distinct techniques, which in turn governs the necessity of each technique in achieving the largest transmission rate region. We analytically determine the link-state regimes and associated relaying techniques. Our results illustrate an interesting trend: when the user-to-relay link is marginally stronger than the direct link, it is optimal to use only independent coding. In this case, the relay need not use full power. However, for larger user-to-relay link gains, the relay must supplement independent coding with block Markov coding to achieve the largest rate region. These link-state regimes are important for the application of two-way relaying in 5G networks, such as in D2D mode or relay-aided transmission.
\end{abstract}

%I don't think we need keywords for conference papers
%\begin{IEEEkeywords}
%%\boldmath
%Two-Way Relay Channel, Cooperative Communication, %Decode-Forward Relaying
%\end{IEEEkeywords}

\IEEEpeerreviewmaketitle
%%%%%%%%%%%%%%%%%%%%%%%%%%%%%%%%%%%%%%%%%%%%%%%%%%%%%%%%%%%%%%%%%%%%%%%%%%%%%%%%%%%%%%%%%%%%%%%%%%%%%%%%%%%%%%%%%%%%%%%%%%%%%%%%%%%%%%%%%%%
%%%%%%%%%%%%%%%%%%%%%%%%%%%%%%%%%%%%%%%%%%%%%%%%%%%%%%%%%%%%%%%%%%%%%%%%%%%%%%%%%%%%%%%%%%%%%%%%%%%%%%%%%%%%%%%%%%%%%%%%%%%%%%%%%%%%%%%%%%%
%\the\textfloatsep
\section{Introduction}\label{sec:intro}
Relay-assisted wireless communication promises to enhance performance in future generation cellular networks. In addition to improving throughput and coverage at the cell edge, relays also offer considerable gain for system capacity. As a result, relays provide a more flexible and efficient use of resources in a dynamic, heterogeneous network. A dedicated relay station and pico base station were proposed to aid communication between base stations and mobiles in 5G networks\cite{dens}. We take this concept one step further and consider a two-way relaying system. Two-way relaying is also utilized for device-to-device (D2D) communications to improve transmission rates and spectral efficiency\cite{d2d}. For example, D2D relaying with either operator or device controlled link establishment is proposed in \cite{d2d5g} in order to realize a two-tiered 5G network. 

Here we consider a two-way relay model, in which a relay, either a dedicated relay node, a base station, or an idle user, helps exchange information between two active users. We assume direct links between users in addition to user-to-relay links, an appropriate model for wireless communication. We focus on the decode-and-forward (DF) relaying strategy as first proposed in \cite{ray2} and as applied in various channel settings \cite{dcfsc,achievable}. When applied to the two-way relay channel (TWRC), there are variations of DF. A DF technique based on random binning \cite{xie2007network} presents an alternative to the original block Markov coding \cite{ref4in2}. A composite DF scheme combining block Markov, independent relaying and partial forwarding strategies has been proposed in \cite{thref}.
%\cite{dcfsc,achievable,3node}

As a step toward practical deployment, it is critical to understand the conditions under which a particular scheme outperforms others. The authors of \cite{asymmetric} and \cite{cooperative} have started to analyze hybrid relaying with various constraints. In \cite{asymmetric}, both DF with joint modulation and DF with network-superposition coding are independently studied in the sum rate maximization problem, assuming an asymmetric half-duplex channel model. Under a QoS constraint, the optimal resource allocation in terms of time and power at the relay is derived. The authors of \cite{cooperative} also combine DF and network coding,  yielding a cooperative network coding scheme. Although it is mentioned that the optimal resource allocation depends on the link state, this point is not thoroughly explored. 

In this paper, we specifically examine how the link state impacts the optimal allocation of power in a composite DF scheme \cite{thref} which combines block Markov superposition coding \cite{ray2} with independent coding based on random binning \cite{xie2007network} in a full-duplex TWRC. We are interested in determining which relay technique (or combination of both) yields the largest rate region for a particular set of link states. The weighted sum rate is maximized for a given set of link states and the resulting optimal power allocation parameters ultimately govern the optimal composition of the signal. By optimizing with regard to the link states, our composite scheme achieves a strictly better rate region than existing DF techniques or simple timesharing between these existing schemes. This link-state perspective is useful in 5G systems to dynamically adapt transmission to changing links. Further, it is demonstrated that in some cases the relay does not need to use full power. This aligns with the projected theme of green communications in 5G \cite{fullduplex}. A low power node that increases both capacity and coverage is of critical importance for 5G networks.

%Section II describes the channel model and composite DF relaying scheme; Section III provides the optimization approach and analysis; Section IV presents insights and numerical results; and Section V concludes the paper.
%%%%%%%%%%%%%%%%%%%%%%%%%%%%%%%%%%%%%%%%%%%%%%%%%%%%%%%%%%%%%%%%%%%%%%%%%%%%%%%%%%%%%%%%%%%%%%%%%%%%%%%%%%%%%%%%%%%%%%%%%%%%%%%%%%%%%%%%%%%
%%%%%%%%%%%%%%%%%%%%%%%%%%%%%%%%%%%%%%%%%%%%%%%%%%%%%%%%%%%%%%%%%%%%%%%%%%%%%%%%%%%%%%%%%%%%%%%%%%%%%%%%%%%%%%%%%%%%%%%%%%%%%%%%%%%%%%%%%%%
\section{Channel Model and Transmission Scheme}\label{sec:system_model}
\subsection{Channel model}
In this paper, we consider a full-duplex two-way relay channel (TWRC). To improve spectral efficiency as well as energy efficiency, full-duplex is advocated for in 5G networks, especially with recent advances in self-interference cancellation \cite{fullduplex}. The full-duplex TWRC can be modeled as
\noindent
\begin{align}\label{gaumod}
Y_1&=g_{12}X_2+g_{1r}X_r+Z_1,\nonumber\\
Y_2&=g_{21}X_1+g_{2r}X_r+Z_2,\nonumber\\
Y_r&=g_{r1}X_1+g_{r2}X_2+Z_r,
\end{align}
\begin{figure}[t] \label{fig1}%do figure* here and at end to span mult cols
    \begin{center}
    \subfigure[]{
		\includegraphics[width=0.22\textwidth]{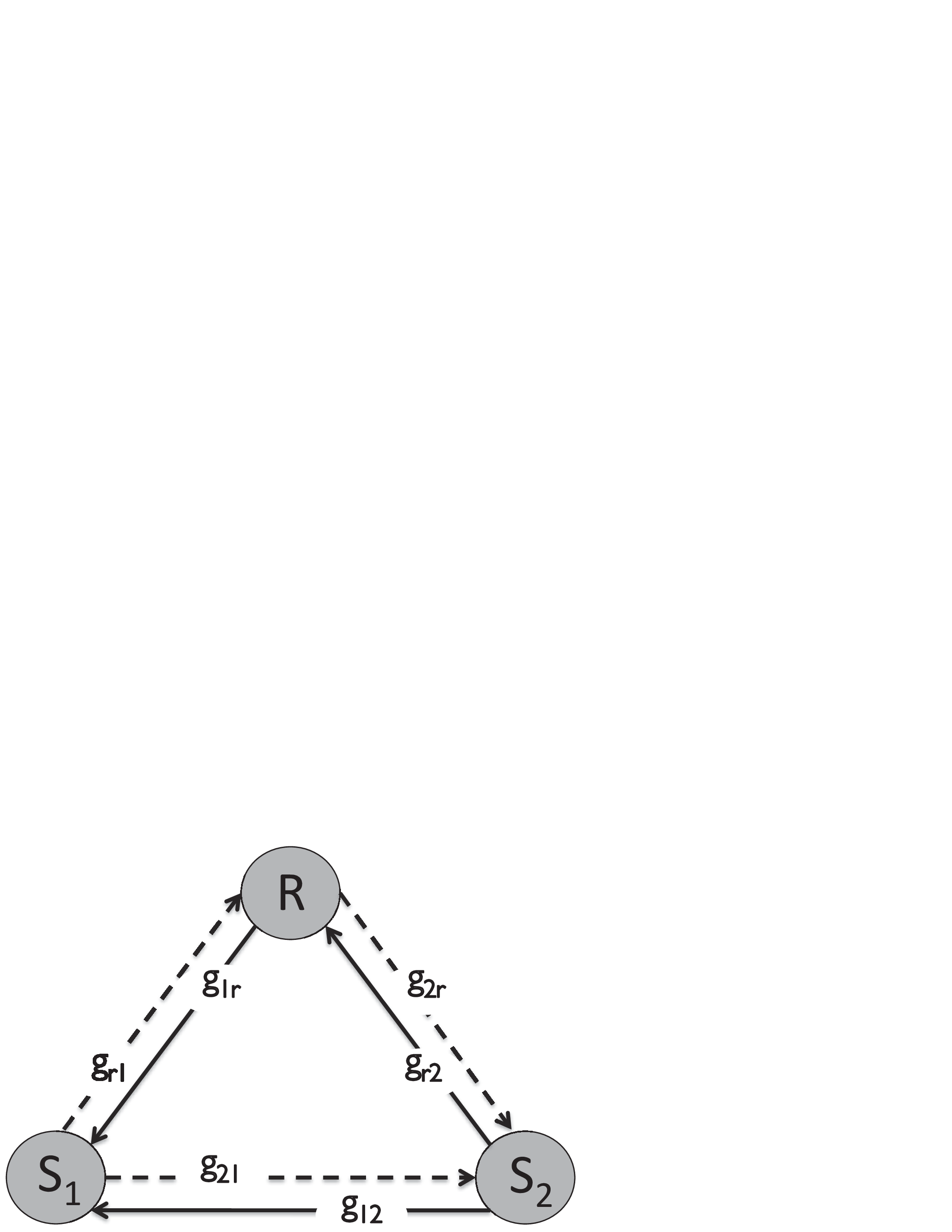}
		\label{fig:channelconfig}
    }
    \subfigure[]{
		\includegraphics[width=0.23\textwidth]{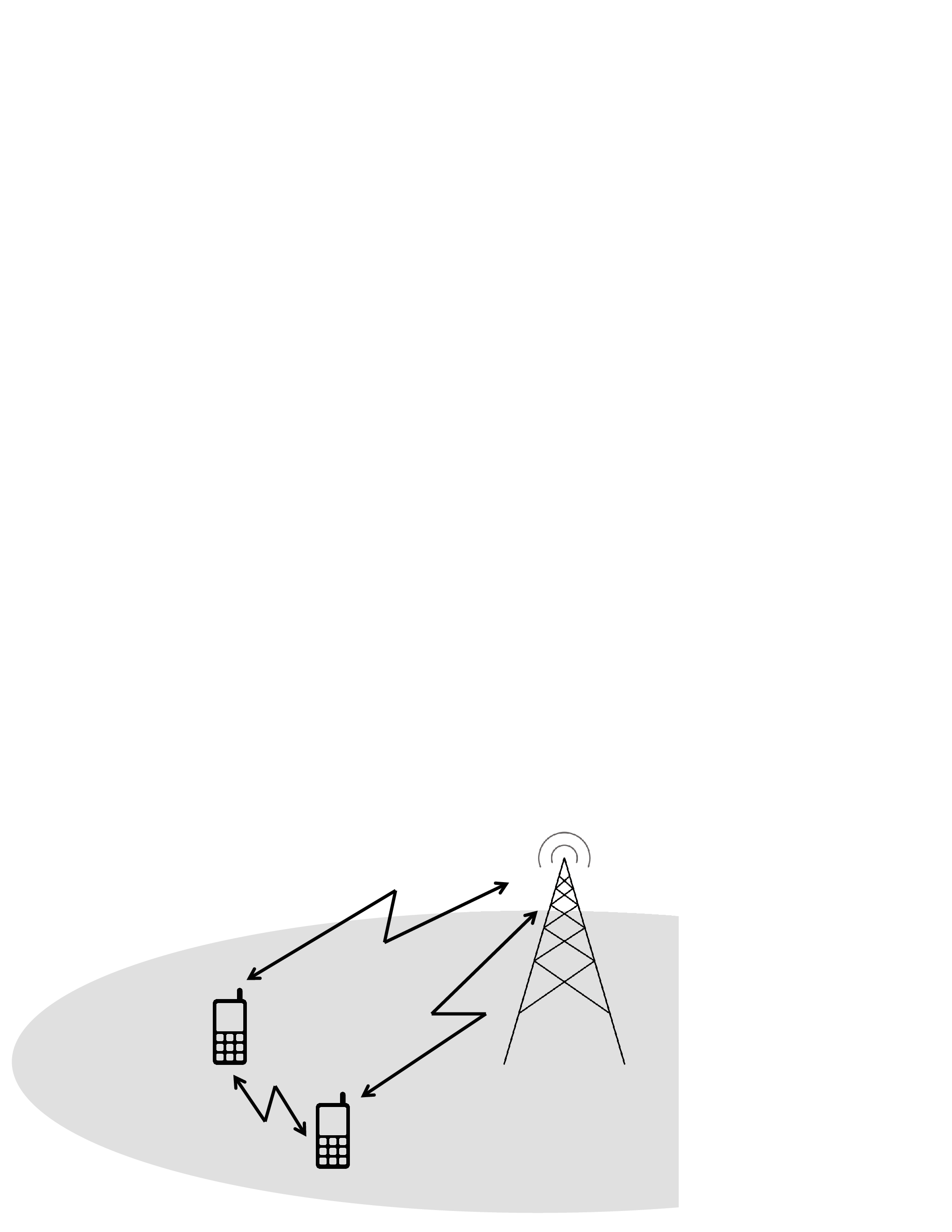}
		\label{fig:cellphone}
    }
    \caption{Two-way relay channel model}
    \end{center}
\vspace*{-7mm}
\end{figure}
\noindent where $Z_{1},Z_{2},Z_{r}\sim$ $\cal{CN}$(0,1) are independent complex Gaussian noises with zero mean and unit variance; ($X_{1}$,$Y_{1}$),  ($X_{2}$,$Y_{2}$), ($X_{r}$,$Y_{r}$) are pairs of the transmitted and received signal at user 1, 2, and the relay, respectively. Without loss of generality, the average input power constraints for all users and the relay can be assumed to be equal to P.

The link gain coefficients are typically complex valued. However, similar to \cite{csi}, we assume the phases of these link coefficients are known at the respective transmitters so that coherent transmission is possible, and the full link coefficients are known at the corresponding receivers. Future work will investigate the relaxation of this assumption in fading channels. Channel state information (CSI) at the transmitter could be obtained in 5G by feeding it back from the respective receiver. As such, the achievable rate depends only on the $amplitude$ of the link coefficients, denoted by $g_{ij}$ (from node $j$ to $i$). This link model is illustrated in Fig. \ref{fig:channelconfig}. Note that this model is the most general version of the TWRC, in which links are not assumed to be reciprocal and the direct links are present. A multi-hop TWRC model without the direct links can be obtained simply by setting $g_{21}$ and $g_{12}$ equal to zero in (\ref{gaumod}). 
 
This general TWRC model could readily be applied in several 5G scenarios. All nodes could be user equipment (UEs), such as in device relaying with device controlled link establishment \cite{d2d5g}. Alternatively, the base station could assist active UEs in D2D mode, as illustrated in Fig. 1(b). Another example is a low powered base station as in \cite{dens} which relays information from a user to an anchor base station with wired backhaul. We extend these examples to the two-way relaying case for improved spectral efficiency. Our composite scheme can be employed to improve transmission rates in any of these scenarios. A more flexible, spectrally efficient 5G network would incorporate all of these scenarios: idle users would have the capability and economic incentives to act as relays, and base stations would have intelligent relaying capabilities in areas demanding increased capacity.

%\noindent
%\begin{figure}[t]
%    \begin{center}
%    \includegraphics[width=0.22\textwidth]{channelconfig.eps}
%    \caption{Two-way Relay Channel Model}
%    \label{fig:channelconfig}
%    \end{center}
%\vspace*{-7mm}
%\end{figure}
%\noindent

%%Block Markov: uses both U and W. when relay doesn't send W, then independent coding. X is superposition structure on W (X=W+U)
%%Beta1 and beta2 are always nonzero
%%For block makrov coding beta3=0
\subsection{Transmission scheme}
We first describe a full-duplex scheme designed for the TWRC based on DF relaying as proposed in \cite{thref}. The relay has the option of transmitting using three distinctive techniques: independent coding, a signal structure that enables block Markov coding at the sources, or a combination of these two approaches. Both sources may perform block Markov coding or independent coding depending on the signal structure at the relay. Here we describe the composite scheme consisting of all of these techniques, then proceed to analyze it in the next section.

\subsubsection{Transmit signal design}
Let's denote the new message user 1 and user 2 send in block $i$ as $m_{1,i}$ and $m_{2,i}$ respectively. The relay partitions the set of all messages of the previous block $\{m_{1,i-1},m_{2,i-1}\}$ equally and uniformly into a number of bins and indexes these bins by $l_{i-1}$. The users and the relay then construct the transmit signals in block $i$ as follows:
\noindent
\begin{align}\label{txsignals}
&X_1\!=\sqrt{\alpha_1}W_1(m_{1,i-1})+\sqrt{\beta_1}U_1(m_{1,i})\\
&X_2\!=\sqrt{\alpha_2}W_2(m_{2,i-1})+\sqrt{\beta_2}U_2(m_{2,i})\nonumber\\
&X_r\!\!=\!\!\sqrt{k_1\alpha_1}W_1(m_{1,i-1})\!+\!\!\sqrt{k_2\alpha_2}W_2(m_{2,i-1})\!+\!\!\sqrt{\beta_3}U_r(l_{i-1})\nonumber
\end{align}
\noindent
where $W_1$, $W_2$, $U_1$, $U_2$, $U_r$ are independent Gaussian signals with zero mean and unit variance that encode the respective messages and bin index. Power allocation factors $\alpha_1$, $\alpha_2$, $\beta_1$, $\beta_2$, $\beta_3$ are non-negative and satisfy
\noindent
\begin{align}\label{powerconstraints}
&\alpha_1+\beta_1 \le P, \ \  \alpha_2+\beta_2 \le P, \  \ k_1\alpha_1+k_2\alpha_2+\beta_3 \le P
\end{align}
\noindent
where $k_1$, $k_2$ are scaling factors that relate the power allocated to transmit the same message at each source and relay in the block Markov signal structure.

Therefore, in each transmission block when $\alpha_i \neq 0, i\in\{1,2\}$, both users send not only the new messages of that block, but also repeat the message of the previous block. This retransmission due to block Markov coding creates a coherency between the signal transmitted from each source and the relay, which ultimately results in a beamforming gain. However, the relay must split its power between $W_1$ and $W_2$ for the retransmission, each of which carries only a single message ($m_{1,i-1}$ or $m_{2,i-1}$). In addition to the block Markovity functionality, the relay also creates a new signal $U_r$ that independently encodes both messages via binning. Using independent coding, one bin index is able to solely represent a message pair $(m_{1,i-1},m_{2,i-1})$.
\subsubsection{Decoding}
At the relay, decoding is quite simple and is similar to the multiple access channel (MAC). The received signal in each block at the relay is
\noindent
\begin{align}\label{Yr}
Y_r\!&=\!g_{r1}(\sqrt{\alpha_1}W_1\!\!+\!\!\sqrt{\beta_1}U_1)\!+\!g_{r2}(\sqrt{\alpha_2}W_2\!\!+\!\!\sqrt{\beta_2}U_2)\!\!+\!Z_r
\end{align}
\noindent
In block $i$, the relay already knows signals $W_1$, $W_2$ (which carry $m_{1,i-1}$, $m_{2,i-1}$), and is interested in decoding $U_1$ and $U_2$ (which carry $m_{1,i}$, $m_{2,i}$). The optimal maximum aposteriori probability (MAP) decoding rule for the relay is
\noindent
\begin{align}\label{decodingruleR}
(\tilde{m}_{1,i},\tilde{m}_{2,i}) &= \operatorname*{arg\,max} P(U_{1,i},U_{2,i}|Y_{r,i},W_{1,i},W_{2,i}).
\end{align}

Sliding window decoding is performed at each user based on signals received in two consecutive blocks. To decode new information sent in block $i$, a user looks at received signals in both blocks $i$ and $i+1$, resulting in a one-block decoding delay. The received signal in each block for user 2 is
\noindent
\begin{align}\label{Y2}
Y_{2} &= g_{2r}(\sqrt{k_1\alpha_1}W_{1}+\sqrt{k_2\alpha_2}W_{2}+\sqrt{\beta_3}U_{r})\nonumber\\
&+g_{21}(\sqrt{\alpha_1}W_{1}+\sqrt{\beta_1}U_{1})+Z_2.
\end{align}
\noindent
Assuming that user 2 has correctly decoded $m_{1,i-1}$, then in block $i$, user 2 knows $W_{1}$, $W_{2}$, and $U_r$ and can subtract them from $Y_{2,i}$. Next, in block $i+1$, user 2 only knows $W_2$, and can directly subtract it from $Y_{2,i+1}$. 

We write this sliding window joint decoding simultaneously over two blocks using the optimal maximum aposteriori probability decoding rule at user 2 as follows:
\noindent
\begin{align}\label{decodingrule2}
\hat{m}_{1,i} &= \operatorname*{arg\,max} P(U_{1,i}|Y_{2,i},W_{1,i},W_{2,i},U_{r,i})\nonumber\\&\cdot P(W_{1,i+1},U_{r,i+1}|Y_{2,i+1},W_{2,i+1},m_{2,i}). 
\end{align}

\noindent

\subsubsection{Achievable Rate Region}
With the above transmit signals and decoding rules, we obtain the following Theorem \ref{th_acheivablerate}.

\begin{thm}\label{th_acheivablerate}
Using the proposed DF based scheme, the following rate region is achievable for the Gaussian TWRC:
\begin{align}\label{RateConstraints}
R_1\!\leq\!&\min\{J_1,J_2\},\ R_2\!\leq\!\min\{J_3,J_4\},\ R_1\!+\!R_2\!\leq\!J_5\\
\text{where }&J_1=C(g_{r1}^{2}\beta_1), \ \ \  J_3=C(g_{r2}^{2}\beta_2)\nonumber\\
&J_2=C(g_{21}^{2}P+2g_{21}g_{2r}\sqrt{k_1}\alpha_1+g_{2r}^{2}k_1\alpha_1+g_{2r}^{2}\beta_3)\nonumber\\
&J_4=C(g_{12}^{2}P+2g_{12}g_{1r}\sqrt{k_2}\alpha_2+g_{1r}^{2}k_2\alpha_2+g_{1r}^{2}\beta_3)\nonumber\\
&J_5=C(g_{r1}^{2}\beta_1+g_{r2}^2\beta_2)
\label{RateConstraints2}
\end{align}
with $C(x)=log(x+1),\ g_{*}$ as amplitudes of link coefficients, and power allocation factors satisfying (\ref{powerconstraints}).
\end{thm}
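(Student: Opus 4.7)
The plan is to establish achievability via random codebook generation with block Markov superposition coding plus random binning, and to verify the stated bounds by separate joint-typicality (equivalently MAP) analyses at the relay and at each user. Fix any $\alpha_1,\alpha_2,\beta_1,\beta_2,\beta_3,k_1,k_2$ meeting (\ref{powerconstraints}). Generate independent Gaussian codebooks $W_1(\cdot),W_2(\cdot),U_1(\cdot),U_2(\cdot)$ indexed by the respective messages and $U_r(\cdot)$ indexed by a bin index $l=\phi(m_1,m_2)$, where $\phi$ is a uniformly random partition of message pairs into $2^{nR_0}$ bins with $R_0 \le R_1+R_2$. Transmit signals are as in (\ref{txsignals}). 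The usual induction assumes the relay correctly decodes $(m_{1,i-1},m_{2,i-1})$ at the end of block $i-1$ and each user correctly decodes the other's previous-block message before the current sliding window closes.

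For the relay, since $W_1(m_{1,i-1})$ and $W_2(m_{2,i-1})$ are known in block $i$, subtracting them from $Y_r$ in (\ref{Yr}) leaves a standard two-user Gaussian MAC
\begin{equation*}
\tilde Y_r = g_{r1}\sqrt{\beta_1}\,U_1(m_{1,i}) + g_{r2}\sqrt{\beta_2}\,U_2(m_{2,i}) + Z_r,
\end{equation*}
whose capacity region directly yields $R_1\le C(g_{r1}^2\beta_1)=J_1$, $R_2\le C(g_{r2}^2\beta_2)=J_3$, and $R_1+R_2\le C(g_{r1}^2\beta_1+g_{r2}^2\beta_2)=J_5$.

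For user 2 decoding $m_{1,i}$ via the sliding window over blocks $i$ and $i+1$: in block $i$, subtracting the known $W_1(m_{1,i-1}),W_2(m_{2,i-1}),U_r(l_{i-1})$ (which user 2 reconstructs from previously decoded $m_{1,i-1}$ and its own $m_{2,i-1}$) leaves $\tilde Y_{2,i}=g_{21}\sqrt{\beta_1}\,U_1(m_{1,i})+Z_{2,i}$; in block $i+1$, knowing $m_{2,i}$ lets user 2 cancel $g_{2r}\sqrt{k_2\alpha_2}W_2(m_{2,i})$, leaving
\begin{equation*}
\tilde Y_{2,i+1} = (g_{21}\sqrt{\alpha_1}+g_{2r}\sqrt{k_1\alpha_1})\,W_1(m_{1,i}) + g_{2r}\sqrt{\beta_3}\,U_r(l_i) + g_{21}\sqrt{\beta_1}\,U_1(m_{1,i+1}) + Z_{2,i+1},
\end{equation*}
where $U_1(m_{1,i+1})$ is treated as Gaussian noise and both $W_1$ (from the block Markov coherence between user~1 and relay) and $U_r(l_i)$ (from the bin carrying the pair $(m_{1,i},m_{2,i})$) carry information about $m_{1,i}$. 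Joint typicality decoding across the two blocks, using the independence of codebooks for distinct $(m_{1,i},l_i)$, yields $R_1\le I(U_1;\tilde Y_{2,i}) + I(W_1,U_r;\tilde Y_{2,i+1}\mid m_{2,i})$. A direct Gaussian computation and the identity $\log(1+a)+\log(1+\tfrac{b}{1+a})=\log(1+a+b)$ collapses this sum to $J_2$, where the cross term $2g_{21}g_{2r}\sqrt{k_1}\,\alpha_1$ captures the beamforming gain from coherent combining of $W_1$, and $g_{2r}^2\beta_3$ captures the independent relay coding contribution. An identical argument at user 1 gives $R_2\le J_4$, and intersecting all bounds yields the region in (\ref{RateConstraints})--(\ref{RateConstraints2}).

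The main obstacle is the sliding-window error analysis at the user, specifically showing that competing hypotheses $\tilde m_{1,i}\ne m_{1,i}$ yield effectively independent $(W_1,U_r)$ pairs so that the MAC-style bound with combined signal power $(g_{21}\sqrt{\alpha_1}+g_{2r}\sqrt{k_1\alpha_1})^2+g_{2r}^2\beta_3$ is valid; one must separately handle the case where a wrong $m_{1,i}'$ lies in the same bin as $m_{1,i}$ (so that only the $W_1$ term discriminates) and the case where it lies in a different bin (where both $W_1$ and $U_r$ differ independently), verifying that the number of bins $R_0$ can be chosen, consistently with the relay sum constraint $R_0 \le J_5$, to make both types of error events vanish as $n\to\infty$.
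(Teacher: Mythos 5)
Your proposal is correct and follows essentially the same route the paper relies on: the paper defers the full proof to \cite{thref}, but its stated reasoning—$J_1,J_3,J_5$ from MAC-style decoding of $U_1,U_2$ at the relay after cancelling the known $W_1,W_2$, and $J_2$ (resp.\ $J_4$) from sliding-window joint decoding over two consecutive blocks at user 2 (resp.\ user 1) exploiting the coherent $W_1$ (beamforming) term and the binned $U_r$ term—is exactly what you reconstruct, and your Gaussian evaluation collapsing the two-block mutual informations to $J_2$ checks out. The remaining binning/error-analysis details you flag are the standard ones handled in the cited reference, so no gap in approach.
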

\begin{proof}
For full proof, see \cite{thref}.
%See \cite{thref}.
\end{proof}
Intuitively, rate constraints $J_1$, $J_3$, and $J_5$ come directly from decoding the messages at the relay as in (\ref{decodingruleR}), the same as in a MAC. $J_2$ comes from decoding the message at user 2 as described in (\ref{decodingrule2}), in which we leverage the simultaneous decoding over two blocks to achieve a higher rate than separate decoding. $J_4$ is derived similarly to $J_2$, but for user 1 instead of user 2. As such, the transmission rate of each user is constrained by the smaller of the two rates achievable by decoding either at the receiving user or at the relay. 

%\noindent
%\begin{figure}[t] \label{fig2}%do figure* here and at end to span mult cols
%    \begin{center}
%    \subfigure[]{
%		\includegraphics[width=0.14\textwidth]{fig2a.eps}
%		\label{fig:exrateconstraintsA}
%    }
%    \subfigure[]{
%		\includegraphics[width=0.15\textwidth]{fig2b.eps}%NEED TO CHANGE THIS!!!!!!!!!!!!!!!
%		\label{fig:exrateconstraintsB}
%    }
%    \subfigure[]{
%		\includegraphics[width=0.14\textwidth]{fig2c.eps}
%		\label{fig:exrateconstraintsC}
%    }
%    \caption{Different cases for achievable rate region of the Gaussian TWRC as described by (\ref{RateConstraints}); $a,b\in\{1,2\}$, $c,d\in\{3,4\}$.}
%    \label{fig:exrateconstraints}
%    \end{center}
%\vspace*{-7mm}
%\end{figure}
%%%%%%%%%%%%%%%%%%%%%%%%%%%%%%%%%%%%%%%%%%%%%%%%%%%%%%%%%%%%%%%%%
%%%%%%%%%%%%%%%%%%%%%%%%%%%%%%%%%%%%%%%%%%%%
\section{Rate Region Analysis}\label{sec:analysis}
\subsection{Problem formulation and approach}
In the proposed scheme, the relay has the option to perform independent coding, block Markov coding, or a combination of the two. Sources can transmit with or without block Markov coding. The goal of this section is to investigate 
how link states affect which transmission strategy should be chosen to produce the largest rate region. For a given set of link states, our analysis reveals which coding strategy will produce the largest rate region.

%An example of the achievable rate region described by (\ref{RateConstraints}) for fixed power allocation ($\alpha_1, \alpha_2, \beta_1, \beta_2, \beta_3$) is depicted in Fig. \ref{fig:exrateconstraints}. In Fig. \ref{fig:exrateconstraintsA}, where $J_1\neq J_2$ and $J_3\neq J_4$, region 1 is the achievable rate region. The case where $J_3=J_4$ is depicted in Fig. \ref{fig:exrateconstraintsB}. Note that constraints $J_1$ and $J_3$ cannot both simultaneously be tight, since this violates the sum rate constraint $J_5$. The corner points of the rate regions, points A and B are described respectively as:
%\noindent
%\begin{align}\label{pointA}
%\ &R_1=J_5-R_2,\ R_2=\min\{J_3,J_4\}\\
%\text{or }&R_1=\min\{J_1,J_2\}, \ R_2=J_5-R_1
%\label{pointB}
%\end{align}
%\noindent
%It should be noted that the achievable rate region is not necessarily a pentagon: as illustrated in \ref{fig:exrateconstraintsC}, it could be rectangular when points A and B collide. 

From the rate constraints in (\ref{RateConstraints}), and for some $\mu \in [0,1]$ an optimization problem to maximize the rate region boundary is
\noindent
\begin{align}\label{optimization}
\max \ &\mu R_1+(1-\mu)R_2\nonumber\\
\text{subject} \text{ to }&R_1\! \leq\! \min\{J_1,J_2\},\ R_2\! \leq \! \min\{J_3,J_4\},\ R_1\!+\!R_2\! \leq \! J_5,\nonumber\\
&k_1\alpha_1+k_2\alpha_2+\beta_3\! \le \! P,\ \alpha_1+\beta_1\! \le \! P,\ \alpha_2+\beta_2 \! \le \! P\nonumber\\
&\alpha_1,\alpha_2,\beta_1,\beta_2,\beta_3,k_1,k_2 \geq 0.
\end{align}

When the power allocation is varied, a different pentagonal (or degenerate rectangular) rate region of (\ref{RateConstraints}) is obtained. Typically, only the outermost points from any rate region generated from a particular set of power allocation parameters will be on the boundary of the overall rate region for a given set of link states. This could be either of the two corner points of the rate region. The optimization of these two corner points can be achieved via optimization problem (\ref{optimization}), where $\mu \in [0,1]$. If $\mu \in (1/2,1]$, the rate of user 1 will be prioritized over that of user 2, and the lower corner point will be the optimal point which appears on the boundary of the overall rate region and vice versa for $\mu \in [0,1/2)$. By varying $\mu$ from $0$ to $1$, we trace out the whole boundary of the rate region.

For each value of $\mu$, optimization problem (\ref{optimization}) will give at least one point on the boundary of the overall rate region which corresponds to a particular optimal power allocation for that $\mu$. Since these two cases are reciprocal, we focus on the region where $\mu \in (1/2,1]$.

Optimization problem (\ref{optimization}) is convex and the optimal power allocation parameters can be fully deduced from the KKT conditions and the rate constraints for a given set of link gains. Our goal is to analyze the KKT conditions, combined with the geometry of the rate constraints, to deduce all cases of link-state conditions under which a single (or both) relay technique is required to achieve the largest rate region. Specifically, we look at the various dual variables associated with the rate and power constraints and analyze them for the tightness of those constraints. %$\lambda_i, i \in\{1,8\}$, (one for each of the equations in (\ref{RateConstraints2}) where $\lambda_i$ corresponds to $J_i$, and $\lambda_6,\lambda_7,\lambda_8$ for the power conditions at user 1, user 2, and the relay respectively from Eq. (\ref{powerconstraints})).%
From these dual parameters we deduce the optimal power parameters $\alpha_i,\beta_j, k_i,\ i\in\{1,2\},\ j\in\{1,2,3\}$ and conditions for the link states. The solution will describe which strategy the relay and users should employ in order to compose the optimal transmit signal based on the link states.

\subsection{Link-state based optimal transmission}
Here we state our main results: the analytical link regimes and associated optimal transmission scheme for each regime. We first define these regimes for the user-to-relay link states:

\begin{definition}\label{rateregiondef}
For the user-to-relay link gain for user 2, $g_{r2}$, and in the case that $g_{12}^2(1+g_{r1}^2P)\leq g_{12}^2+g_{1r}^2$ we define link-state regimes as follows:
\begin{align}
&\mathcal{T}1: g_{r2}^2 \in[0,g_{12}^2]\nonumber\\
&\mathcal{T}2: g_{r2}^2 \in(g_{12}^2, g_{12}^2(1+g_{r1}^2P)]\nonumber\\
&\mathcal{T}3: g_{r2}^2 \in(g_{12}^2(1+ g_{r1}^2P),g_{12}^2+g_{1r}^2]\nonumber\\
&\mathcal{T}4: g_{r2}^2 \in(g_{12}^2+g_{1r}^2,(g_{r1}^2P+1)(g_{12}^2+g_{1r}^2)]\nonumber\\
&\mathcal{T}5: g_{r2}^2 \in((g_{r1}^2P+1)(g_{12}^2+g_{1r}^2),\infty)
\end{align}
For the link from user 1 to the relay, $g_{r1}$, we define:
\begin{align}\label{gr1}
&\mathcal{R}1: g_{r1}^2 \in[0,g_{21}^2]\nonumber\\
&\mathcal{R}2: g_{r1}^2 \in(g_{21}^2, g_{21}^2+g_{2r}^2]\nonumber\\
&\mathcal{R}3: g_{r1}^2 \in(g_{21}^2+g_{2r}^2,\infty)
\end{align}
\end{definition}

\renewcommand{\arraystretch}{1.3}
\begin{table}\label{ratetable}
\begin{center}
    \begin{tabular}{l@{\hskip 0.05in} | l@{\hskip 0.05in} | l@{\hskip 0.05in} | l@{\hskip 0.05in} | l@{\hskip 0.05in} | l@{\hskip 0.05in}}
    $g_{r1}^2 \downarrow g_{r2}^2 \rightarrow$ & $\mathcal{T}1$ & $\mathcal{T}2$ & $\mathcal{T}3$ & $\mathcal{T}4$  & $\mathcal{T}5$\\ \hline
         & U1: DT  & U1: DT & U1: DT & U1: DT & U1: DT\\
             $\mathcal{R}1$& U2: DT & U2: Ind & U2: Ind & U2: BM & U2: BM\\ \hline
              & U1: Ind & U1: Ind & U1: Ind & U1: Ind & U1: Ind \\
    $\mathcal{R}2$ & U2: DT & U2: DT & U2: Ind & U2: Ind  & U2: BM\\ \hline
            & U1: BM & U1: BM & U1: BM & U1: Both & U1: Both\\
    $\mathcal{R}3$ & U2: DT & U2: DT & U2: Ind & U2: Both & U2: Both \\ \hline
    \end{tabular}
        \caption{Optimal Technique By Link Regime, $\mu \in (1/2,1$]}
        Where U1 is user 1, U2 is user 2, 'Ind': independent coding only, 'BM': block Markov coding only, 'DT': direct transmission, 'Both': BM and Ind
        \end{center}
\end{table}
\renewcommand{\arraystretch}{1}

\begin{thm}\label{tablethm}
For $1/2<\mu\le 1$ and $g_{12}^2(1+ g_{r1}^2P)\leq g_{12}^2+g_{1r}^2$, Table 1 shows the optimal techniques for both users in a given link-state regime as defined in Definition \ref{rateregiondef}.
\end{thm}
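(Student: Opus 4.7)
The plan is to solve the convex optimization problem (\ref{optimization}) through its KKT conditions and then classify the resulting solutions by link-state regime. First I would write the Lagrangian with dual variables $\lambda_1,\dots,\lambda_5$ for the five rate constraints of Theorem \ref{th_acheivablerate} and $\nu_1,\nu_2,\nu_3$ for the three power constraints in (\ref{powerconstraints}). Stationarity in $R_1$ and $R_2$ immediately gives $\lambda_1+\lambda_2+\lambda_5=\mu$ and $\lambda_3+\lambda_4+\lambda_5=1-\mu$, so for $\mu>1/2$ the dual mass on $R_1$'s two upper bounds strictly exceeds that on $R_2$'s. This asymmetry drives the qualitative direction of the solution and is also why the table need only be established for $\mu\in(1/2,1]$, with the reciprocal range following by the $1\leftrightarrow 2$ symmetry of the channel model.

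The next step is to read off the indicators of each coding technique from complementary slackness together with stationarity in the power variables: $\alpha_i>0$ certifies that user $i$ uses block-Markov coding, $\beta_3>0$ that the relay performs independent binning, and $k_i>0$ that the relay participates in user $i$'s block-Markov chain. The threshold boundaries in Definition \ref{rateregiondef} are then recovered by equating pairs of the rate expressions $J_1,\dots,J_5$ at candidate extreme allocations. For example, $g_{r2}^2=g_{12}^2$ is exactly where $J_1\le J_2$ first fails at $\alpha_1=0$, so below this threshold no bin at the relay can improve on the direct link and user 2's receiver defaults to DT for user 1's message. The boundary $g_{r2}^2=g_{12}^2+g_{1r}^2$ is where pure block-Markov at $\alpha_2=P,\beta_3=0$ first outperforms pure independent coding at $\alpha_2=0,\beta_3=P$ inside $J_4$, while the intermediate $g_{r2}^2=g_{12}^2(1+g_{r1}^2P)$ encodes the balance involving $J_5$, which is why the factor $g_{r1}^2P$ appears. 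The $\mathcal{R}$-thresholds on $g_{r1}^2$ follow from the symmetric analysis with user indices swapped. The side condition $g_{12}^2(1+g_{r1}^2P)\le g_{12}^2+g_{1r}^2$ simply guarantees the four $g_{r2}^2$-thresholds are in nondecreasing order so that all five $\mathcal{T}$-regimes are nonempty.

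Finally I would sweep through the 15 cells of Table \ref{ratetable}, in each writing down a candidate power allocation of the form dictated by the label (for instance $\alpha_1=k_1=0,\ \beta_1=P$ for ``U1: DT''), verifying primal feasibility under (\ref{powerconstraints}), and checking that the remaining KKT conditions admit nonnegative duals. The easy corners are $(\mathcal{R}1,\mathcal{T}1)$, where everything collapses to $\beta_1=\beta_2=P$ with all other variables vanishing, and $(\mathcal{R}3,\mathcal{T}5)$, where all of $\alpha_1,\alpha_2,\beta_3,k_1,k_2$ are strictly positive. The main obstacle I expect lies in the interior cells with shared relay usage, for example $(\mathcal{R}2,\mathcal{T}3)$ and the ``Both/Both'' cells $(\mathcal{R}3,\mathcal{T}4)$ and $(\mathcal{R}3,\mathcal{T}5)$: there $\beta_3$ must be split between the two users and no single rate constraint pins down the optimum. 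For those cells one has to combine stationarity in $\beta_3$ with the relay power constraint $k_1\alpha_1+k_2\alpha_2+\beta_3\le P$ and the sum-rate bound $J_5$ to solve a small linear system in the surviving dual and primal variables; the positivity conditions of its solution are precisely what carve out the sharp boundaries in Definition \ref{rateregiondef}, and verifying this alignment is the core calculation of the theorem.
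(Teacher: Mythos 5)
Your proposal follows essentially the same route as the paper: form the Lagrangian of the convex problem (\ref{optimization}), use stationarity plus complementary slackness to decide which of $\alpha_i,\beta_3,k_i$ are positive (hence which technique each user employs), and read the regime boundaries of Definition \ref{rateregiondef} off the tightness of the rate constraints $J_1,\dots,J_5$ at the resulting allocations --- the paper simply executes this program for the single representative cell $(\mathcal{R}2,\mathcal{T}5)$ (obtaining $\beta_3$ from $J_1\le J_2$ and the $\mathcal{T}5$ bound from $J_4=J_5-J_1$) and defers the remaining cells, which is exactly the sweep you outline. One small transposition to fix in execution: the $\mathcal{T}$-thresholds on $g_{r2}^2$ (e.g.\ $g_{r2}^2=g_{12}^2$) come from the constraints governing user 2's message, i.e.\ $J_3$ versus $J_4$ and $J_5$, while $J_1\le J_2$ yields the $\mathcal{R}$-thresholds on $g_{r1}^2$, not the other way around as in your illustrative example.
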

\begin{proof}
See Appendix A. 
\end{proof}

\begin{rem}
The case in which $g_{12}^2(1+ g_{r1}^2P)>g_{12}^2+g_{1r}^2$ is similar but omitted due to space and will be included in an upcoming journal article.
\end{rem}

\begin{rem}
For $0\leq \mu < \frac{1}{2}$, we obtain a transposed table with the coding for user 1 and user 2 reversed.
\end{rem}

%The following discussion shows the power allocation for the various techniques outlined in Table 1:
%\begin{enumerate}
%\item When user $i$ performs direct transmission (DT), the optimal parameters are $\alpha_i=0, \ \beta_i=P$.
%%\begin{align}\label{thm1}
%%\alpha_i=0, \ \beta_i=P
%%\end{align}
%If both users do DT, $\beta_3\!=\!0$ since the relay is not used.
%\item When user $i$ performs independent coding only, the optimal parameters are $\alpha_i=0, \ \beta_i=P$.
%%\begin{align}\label{thm2}
%%\alpha_i=0, \ \beta_i=P
%%\end{align}
%$\beta_3$ will be nonzero and depend on the rate constraints of user $i$.
% \item When user $i$ performs block Markov coding only, $\alpha_i>0, \ \beta_i<P$.
% % \begin{align}\label{thm4}
%%&\alpha_i>0, \ \beta_i<P
%% \end{align}
%$\beta_3$ will not depend on the rate constraints of user $i$ in this case.
% \item For any case not previously mentioned, such as when a user does both independent and block Markov coding, the optimal power allocation parameters are nontrivial and must be derived from the KKT conditions and the geometry of the rate constraints.
% \end{enumerate}

\noindent From Theorem 2 and the transmit signal design, it follows that if either user performs independent coding, then $\beta_3$ depends on the rate constraints of that user. Also, $\alpha_i>0$ if user $i$ performs block Markov coding. These results provide the practical guidance for the optimal implementation of our scheme based on the link states and are discussed in Section \ref{sec:dis}.

\subsection{Required relay transmit power}
As a consequence of the link regime analysis, both users must always use full power but the relay does not necessarily need to use full power. The relay power depends on the transmission scheme as shown in Lemma 1 and 2.
\begin{lem}\label{lemmaUsersFullPwr}
Given any set of link states, both users will always use full power. The relay will use full power if either user performs block Markov coding.
\end{lem}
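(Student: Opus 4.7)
The plan is to derive both claims from the KKT conditions for the convex program (\ref{optimization}). Attach dual variables $\lambda_1,\ldots,\lambda_5$ to the five rate inequalities in (\ref{RateConstraints}) and $\nu_1,\nu_2,\nu_3$ to the three power constraints in (\ref{powerconstraints}), keeping non-negativity multipliers for the power-allocation variables. The key structural fact I would repeatedly use is that each $J_k$ is strictly increasing in each of its power arguments; whenever the associated $\lambda_k$ is positive, the corresponding stationarity equation forces the matching $\nu$ to be positive, and complementary slackness then makes that power constraint tight.

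For the first claim, I would begin from stationarity in $R_1$ and $R_2$, which yield $\lambda_1+\lambda_2+\lambda_5=\mu$ and $\lambda_3+\lambda_4+\lambda_5=1-\mu$; for $\mu\in(1/2,1]$ both sums are strictly positive. I would then write stationarity in $\beta_1$, reverting to the pre-substituted form $J_2=C(g_{21}^2(\alpha_1+\beta_1)+\cdots)$ so that the dependence on $\beta_1$ is explicit. Since $\partial_{\beta_1}J_1,\partial_{\beta_1}J_2,\partial_{\beta_1}J_5$ are strictly positive and at least one of $\lambda_1,\lambda_2,\lambda_5$ must be positive, the stationarity equation $\nu_1=\lambda_1\,\partial_{\beta_1}J_1+\lambda_2\,\partial_{\beta_1}J_2+\lambda_5\,\partial_{\beta_1}J_5$ forces $\nu_1>0$, hence $\alpha_1+\beta_1=P$. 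The symmetric argument in $\beta_2$ gives $\alpha_2+\beta_2=P$.

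For the relay claim, suppose WLOG that user 1 performs block Markov coding, which I interpret as $\alpha_1>0$. Since $\alpha_1$ is then interior, its non-negativity multiplier vanishes and stationarity in $\alpha_1$ reduces to $\lambda_2\,\partial_{\alpha_1}J_2=\nu_1+k_1\nu_3$. Because $\nu_1>0$ from the previous step and $\partial_{\alpha_1}J_2=(g_{21}+g_{2r}\sqrt{k_1})^2/(1+\mathrm{arg})>0$, this forces $\lambda_2>0$. Stationarity in $\beta_3$ then yields $\nu_3=\lambda_2\,\partial_{\beta_3}J_2+\lambda_4\,\partial_{\beta_3}J_4$, whose right-hand side is strictly positive; hence $\nu_3>0$ and the relay power constraint is tight. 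The case of user 2 using BM is symmetric and routes through $\lambda_4>0$.

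The main obstacle I expect is the careful bookkeeping of which non-negativity multipliers vanish and of the exact form of $J_2$ when differentiating. In particular, the simplified expression $J_2=C(g_{21}^2 P+\cdots)$ in Theorem \ref{th_acheivablerate} already presumes $\alpha_1+\beta_1=P$, so at the stationarity step for $\beta_1$ I must restore the unsimplified dependence before taking partials—an easy slip. A secondary sanity check is to confirm that in the no-BM regimes of Table 1 the argument does \emph{not} force $\nu_3>0$, consistent with the lemma's implicit claim that the relay may then transmit with less than full power.
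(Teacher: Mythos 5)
The paper never actually writes out a proof of this lemma (it is ``omitted due to space''), offering only the one-line intuition that when a user performs block Markov coding it is always beneficial to increase the power of the block Markov component at both that user and the relay --- essentially a direct monotone-improvement argument. Your KKT route is a genuine, essentially complete formalization and it dovetails with the paper's own machinery: Appendix A invokes Lemma 1 precisely to assert $\lambda_6>0$, and your stationarity-in-$\beta_1$ step (using $\lambda_1+\lambda_2+\lambda_5=\mu$ together with the strict monotonicity of $J_1,J_2,J_5$ in $\beta_1$) derives that fact independently, so your argument can be substituted for the omitted proof without circularity; likewise your $\alpha_1$-stationarity step $\lambda_2\,\partial_{\alpha_1}J_2=\lambda_6+k_1\lambda_8$ forcing $\lambda_2>0$, and then $\beta_3$-stationarity forcing $\lambda_8>0$, matches the structure of the paper's Appendix computations. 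Your remark that the $g_{21}^2P$ term in $J_2$ must be restored to $g_{21}^2(\alpha_1+\beta_1)$ before differentiating is exactly the right care to take. Three caveats you should state explicitly, none of which is a substantive gap: (i) at the endpoint $\mu=1$ the identity $\lambda_3+\lambda_4+\lambda_5=1-\mu$ gives zero, so user 2's full power there is optimal only in the weak ``without loss of optimality'' sense (symmetrically $\mu=0$ for user 1); (ii) strict positivity of the relevant partials needs nonzero link gains, so degenerate states such as $g_{r1}=g_{21}=0$ or $g_{2r}=0$ must be excluded or handled trivially; (iii) in the $\beta_3$ stationarity you should carry the nonnegativity multiplier for $\beta_3$, which only adds to the right-hand side and therefore does not affect the conclusion that the relay constraint is tight. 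Compared with the paper's sketch, your dual-variable derivation is longer but buys exactly the facts ($\lambda_6>0$, $\lambda_8>0$) that the paper's later case analysis consumes, whereas the paper's improvement argument is shorter and more intuitive but leaves those dual facts implicit.
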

\begin{proof}\label{proofUsersFullPwr} 
Omitted due to space.
%Suppose we fix $\beta_3$ and $\alpha_1+\beta_1\!<\!P$. From the constraints in (\ref{optimization}) and rate expressions in (\ref{RateConstraints2}), then we can always increase $\alpha_1$ and $\beta_1$ slightly while keeping $k_1\alpha_1$ the same. This increases $J_1$ and $J_2$, which increases user 1's rate. The same holds for user 2 in regards to $J_3$ and $J_4$. Therefore, to obtain the largest rate region, $\alpha_1+\beta_1\!=\!P$ and $\alpha_2+\beta_2\!=\!P$.
%
%Next, let $\alpha_1\!>\!0$ so that user 1 is doing block Markov coding (assume $\alpha_2\!=\!0$ for simplicity). Then if $k_1\alpha_1+\beta_3\!<\!P$, we can always decrease $\alpha_1$ slightly and increase $k_1$ such that $J_1$ and $J_2$ increase, which means user 1's rate also increases. Therefore, we must always have $k_1\alpha_1+k_2\alpha_2+\beta_3\!=\!P$, i.e., the relay is using full power when any $\alpha_i\!>\!0$.
\end{proof}
The proof to Lemma 1 is straightforward and follows from the idea that when a user performs block Markov coding, it is always beneficial to increase the power of the block Markov component at both the user and relay.

\begin{lem}\label{pwrSave}
When the relay performs only independent coding for both users (regimes ($\mathcal{R}2, \mathcal{T}3$) and ($\mathcal{R}2, \mathcal{T}4$)), the optimal relay power is 
\begin{align}\label{beta3}
\beta_3=\max\left(\frac{g_{r2}^2-g_{12}^2(1+g_{r1}^2P)}{g_{1r}^2(g_{r1}^2P+1)}P,\frac{g_{r1}^2-g_{21}^2}{g_{2r}^2}P\right).
\end{align}
\end{lem}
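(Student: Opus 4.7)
The plan is to substitute the regime conditions into the achievable rate expressions of Theorem \ref{th_acheivablerate}, identify which rate constraints must be tight at the optimum, and then invert those tightness conditions to solve for $\beta_3$. Since both users perform only independent coding in regimes $(\mathcal{R}2,\mathcal{T}3)$ and $(\mathcal{R}2,\mathcal{T}4)$, I would set $\alpha_1=\alpha_2=0$, and by Lemma \ref{lemmaUsersFullPwr} take $\beta_1=\beta_2=P$. The rate constraints collapse to
\begin{align*}
J_1 &= C(g_{r1}^2 P),\ J_3 = C(g_{r2}^2 P),\ J_5=C((g_{r1}^2+g_{r2}^2)P),\\
J_2 &= C(g_{21}^2 P + g_{2r}^2\beta_3),\ J_4=C(g_{12}^2 P + g_{1r}^2\beta_3),
\end{align*}
so only $J_2$ and $J_4$ depend on $\beta_3$, and both are strictly increasing in it.

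Since $\mu\in(1/2,1]$ prioritizes $R_1$, the optimum of (\ref{optimization}) first sets $R_1=\min\{J_1,J_2\}$ and then $R_2=\min\{J_3,J_4,J_5-R_1\}$. A short calculation gives the identity $J_5-J_1=C\!\left(g_{r2}^2 P/(1+g_{r1}^2P)\right)\le J_3$, which shows the sum-rate constraint dominates $J_3$; hence once $R_1$ is fixed at $J_1$ the best attainable $R_2$ is $\min\{J_4,J_5-J_1\}$.

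The central step is then to find the smallest $\beta_3$ that simultaneously achieves $J_2\ge J_1$ and $J_4\ge J_5-J_1$, so that the corner point $(J_1,J_5-J_1)$ is attained. Solving $J_2=J_1$ yields the second term of the max in (\ref{beta3}), namely $(g_{r1}^2-g_{21}^2)P/g_{2r}^2$; solving $J_4=J_5-J_1$ using the simplification above yields the first term, $(g_{r2}^2-g_{12}^2(1+g_{r1}^2P))P/(g_{1r}^2(1+g_{r1}^2P))$. Any larger $\beta_3$ cannot improve the weighted sum since $R_1$ is already saturated at $J_1$ and $R_2$ at $J_5-J_1$, so the maximum of the two expressions is precisely the minimum relay power needed to reach the optimum.

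The main obstacle is the bookkeeping needed to reconcile the formula with the regime definitions: I must check that under $g_{r1}^2>g_{21}^2$ and $g_{r2}^2>g_{12}^2(1+g_{r1}^2P)$ both terms in the max are nonnegative, and that the upper bounds characterizing $\mathcal{R}2$ and $\mathcal{T}3\cup\mathcal{T}4$, namely $g_{r1}^2\le g_{21}^2+g_{2r}^2$ and $g_{r2}^2\le(1+g_{r1}^2P)(g_{12}^2+g_{1r}^2)$, force both terms to lie in $[0,P]$. This confirms that (\ref{beta3}) is feasible and, in particular, strictly smaller than $P$ whenever both upper bounds are strict, yielding the power-saving conclusion highlighted in the introduction.
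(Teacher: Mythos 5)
Your proposal is correct, and it arrives at exactly the two binding inequalities that drive the paper's result: $J_2\ge J_1$ gives $\beta_3\ge (g_{r1}^2-g_{21}^2)P/g_{2r}^2$ and $J_4\ge J_5-J_1$ gives $\beta_3\ge (g_{r2}^2-g_{12}^2(1+g_{r1}^2P))P/(g_{1r}^2(1+g_{r1}^2P))$, with the optimum at the max of the two. The difference is in how you justify that these are the binding constraints and that $\beta_3$ should be minimized: the paper works through the KKT conditions of (\ref{optimization}) for a specific active set ($\lambda_1>0$, $\lambda_2=\lambda_3=\lambda_4=0$, $\lambda_5>0$), deduces $\alpha_1=\alpha_2=0$ from the signs of $\nabla_{\alpha_1}\mathcal{L}$ and $\nabla_{\alpha_2}\mathcal{L}$, and then inverts the slack constraints (\ref{gr1bounds}) and (\ref{case1a}); you instead take the independent-coding premise directly from the lemma statement (backed by Theorem \ref{tablethm}), observe that with $\alpha_1=\alpha_2=0$ only $J_2$ and $J_4$ depend on $\beta_3$ and both are increasing in it, and argue geometrically that for $\mu\in(1/2,1]$ the target corner is $(J_1,J_5-J_1)$, using the clean identity $J_5-J_1=C\bigl(g_{r2}^2P/(1+g_{r1}^2P)\bigr)\le J_3$ to discard $J_3$ (the paper encodes the same fact implicitly via $\lambda_3=0$). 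Your route is more elementary and makes explicit why the stated $\beta_3$ is the \emph{minimum} relay power achieving the optimal rate pair (larger $\beta_3$ only saturates already-inactive constraints), while your feasibility check that the regime upper bounds $g_{r1}^2\le g_{21}^2+g_{2r}^2$ and $g_{r2}^2\le(1+g_{r1}^2P)(g_{12}^2+g_{1r}^2)$ keep both terms in $[0,P]$ plays the role of (\ref{gr1compbounds}) and the regime restriction in the paper. The KKT route, on the other hand, is what lets the paper derive (rather than assume) that $\alpha_1=\alpha_2=0$ and that $\beta_3$ must be minimized, so the two arguments are complementary: yours is a self-contained verification given Theorem \ref{tablethm}, the paper's is part of the machinery that proves Theorem \ref{tablethm} itself.
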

\begin{proof} See Appendix A.
\end{proof}
This power savings is a novel result that follows naturally from the link-state perspective. When the relay performs independent coding in ($\mathcal{R}1, \mathcal{T}2$), ($\mathcal{R}1, \mathcal{T}3$), ($\mathcal{R}2, \mathcal{T}1$) or ($\mathcal{R}2, \mathcal{T}2$), similar results apply but with a different $\beta_3$ omitted here due to space.

\section{Discussion and Numerical Results}\label{sec:dis}
\subsection{Discussion of link regimes}
Table 1 describes the optimal relay techniques for each link-state regime when user 1's rate is prioritized over that of user 2. It includes special cases of the classical one-way relay channel, in which only one user uses the relay and the other user performs direct transmission. The TWRC occurs when both user-to-relay links, $g_{r1}^2$ and $g_{r2}^2$, are stronger than their respective direct links (i.e. outside of $\mathcal{R}1$, $\mathcal{T}1$, and $\mathcal{T}2$), in which the relay actively forwards information for both users. When there are no direct links between the users, such as when the users are too far apart to communicate, $g_{12}^2=g_{21}^2=0$ in Def. \ref{rateregiondef} and it is optimal to use the TWRC.

\begin{figure}[t] \label{fig4}%do figure* here and at end to span mult cols
    \begin{center}
		\includegraphics[width=0.4\textwidth]{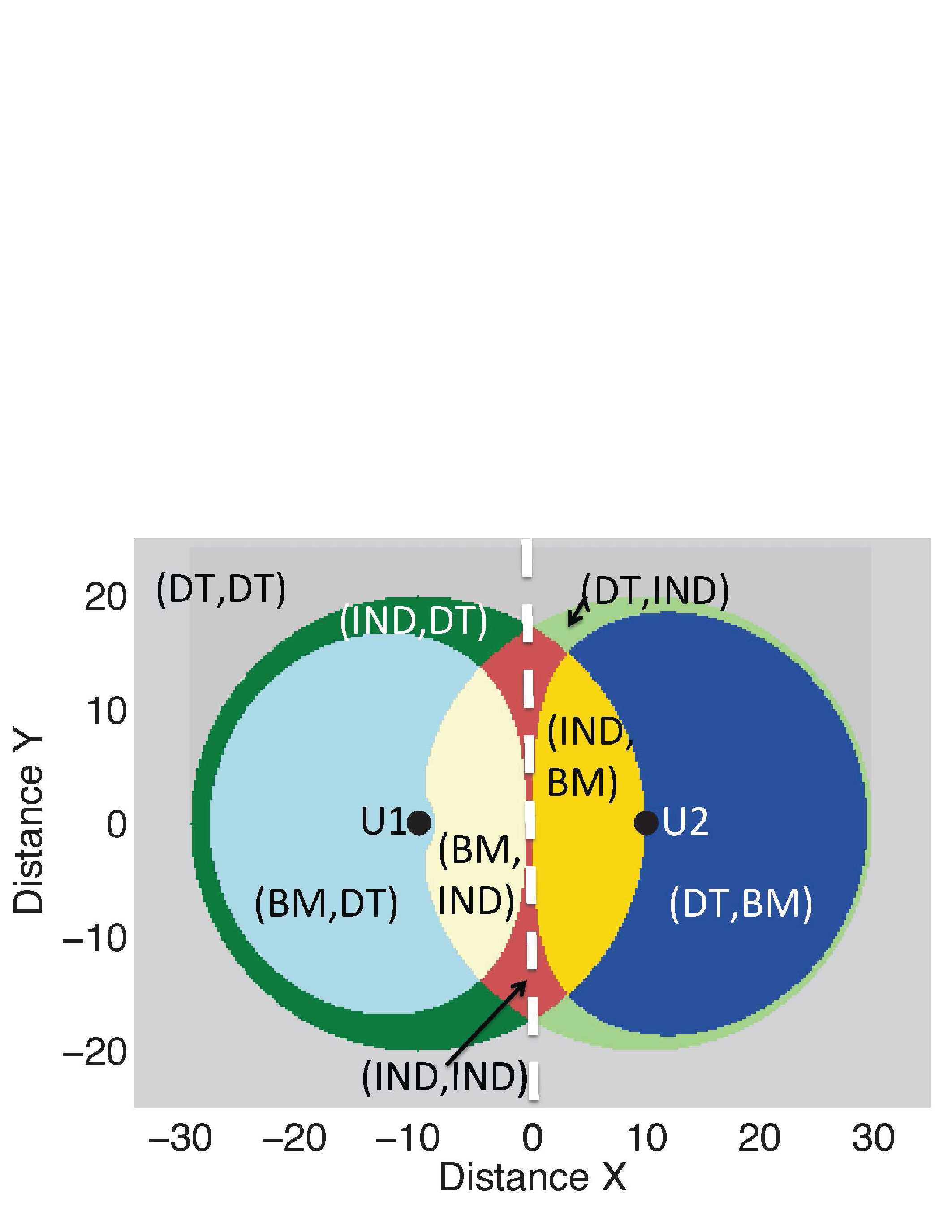}
	
    \caption{Optimal transmission as a function of distance for FDD system}
    \label{fig:distSim}
    \end{center}
\vspace*{-7mm}
\end{figure}

Results shown in Table 1 demonstrate that when the user-to-relay link is just marginally stronger than the direct links, independent coding is employed to provide a network coding gain. With only independent coding, the source uses full power for the new message in each block and the relay uses just enough power to forward the old message. It is interesting to note that in this case, the relay does not need to use full power, as shown in Lemma 2. As the user-to-relay link becomes stronger, block Markov coding is also used. Since the link is strong enough, the gain from coherent beamforming based on block Markov coding outplays the loss from splitting power at the sources between old and new messages. In this case, the relay uses full power to forward the messages.
%\begin{figure*}[t] \label{fig3}%do figure* here and at end to span mult cols
%    \begin{center}
%    \subfigure[]{
%		\includegraphics[width=0.48\textwidth]{rr24_noMarkers.eps}
%		\label{fig:simA}
%    }
%        \subfigure[]{
%		\includegraphics[width=0.48\textwidth]{rr34_noMarkers.eps}
%		\label{fig:simB}
%    }
%    \caption{Numerical results of the rate region for channel gains in regime a) ($\mathcal{R}2,\mathcal{R}4$) with $g_{21}=0.25,\ g_{r1}=0.75,\ g_{12}=0.25,\ g_{r2}=1,\ g_{1r}=0.5,\ g_{2r}=1$ and b) ($\mathcal{R}3,\mathcal{R}4$) with $g_{21}=0.25,\ g_{r1}=1,\ g_{12}=0.25,\ g_{r2}=1,\ g_{1r}=0.5,\ g_{2r}=0.7$}
%    \label{fig:simulation}
%    \end{center}
%\vspace*{-7mm}
%\end{figure*}

Note that Table 1 is asymmetric with respect to user 1 and user 2. This is because the table is optimized for $\mu \in (1/2,1]$, corresponding to the lower corner of the rate region, which favors the rate of user 1 over that of user 2. From this chart, we see that as the user-to-relay link sufficiently improves, it is optimal to allocate some power to the block Markov component of our scheme. When both $g_{r1}^2$ and $g_{r2}^2$ are strong, specifically in ($\mathcal{R}3$, $\mathcal{T}4$) and ($\mathcal{R}3$, $\mathcal{T}5$), both users employ both block Markov coding and independent coding to achieve the largest rate region. Since Table 1 is derived for $\mu \in (1/2,1]$ there are more cases in which user 1 performs block Markov coding. These regimes are transposed for $\mu \in [0,1/2)$.

This theme of transitioning from direct transmission to independent coding to block Markov coding as the link state improves is illustrated in Fig. \ref{fig:distSim}. The distance between the two users is fixed at 20 meters while the relay location varies over the entire X-Y plane, typical inter-node distances for a small cell in 5G systems. A frequency-division duplex (FDD) system is assumed with $P=1$, a path loss exponent ($\gamma_1$) of 2.3 for $g_{r1},g_{2r}$ and $g_{21}$, and $\gamma_2=3.6$ for $g_{1r},g_{r2},$ and $g_{12}$, where $g_{ij}=1/d_{ij}^{\gamma/2}$ and $d$ is the distance (meters) between nodes $i$ and $j$. Because this is a FDD system and user 1 is prioritized, the regions are not symmetric. When the relay is between the two users in Fig. \ref{fig:distSim}, the relay performs independent coding for both users. The trend of moving from independent coding to block Markov coding as the link improves is exemplified by the TWRC gold and yellow regions. These regions in which one user performs independent coding and the other performs block Markov coding also highlight the need for our composite scheme. By knowing which scheme to employ for which user, the achievable rate region is considerably improved.
\begin{figure}[t] \label{fig3}%do figure* here and at end to span mult cols
    \begin{center}

		\includegraphics[width=0.44\textwidth]{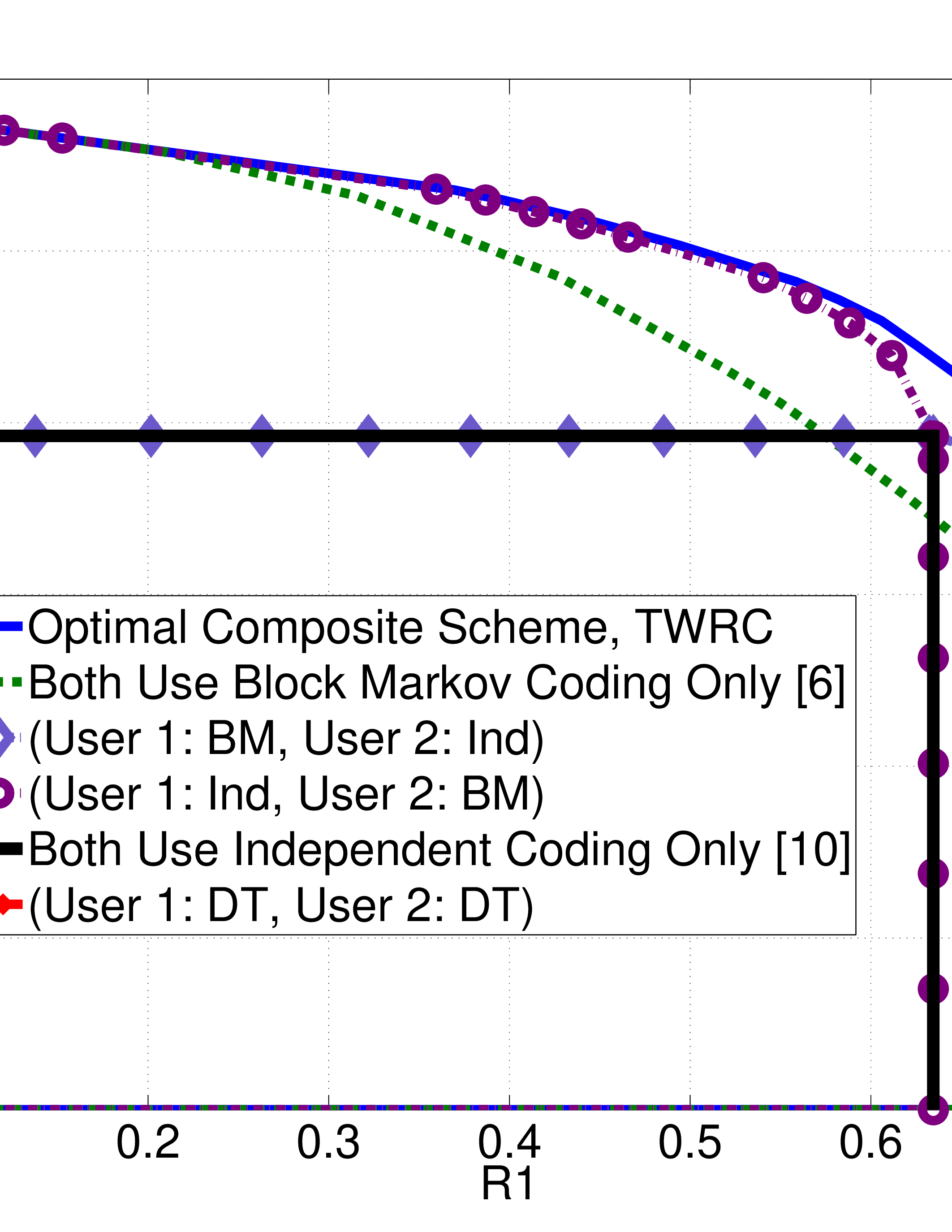}
	
    \caption{Rate region for link gains in ($\mathcal{R}3,\mathcal{T}5$) with $g_{21}=0.25,\ g_{r1}=1,\ g_{12}=0.25,\ g_{r2}=1,\ g_{1r}=0.5,\ g_{2r}=0.7$}
    \label{fig:simulation}
    \end{center}
\vspace*{-7mm}
\end{figure}
%It's interesting to note that in the case of reciprocal channels, such as in a time-division duplex (TDD) system, regions ($\mathcal{R}3$, $\mathcal{T}3$) and ($\mathcal{R}3$, $\mathcal{T}4$) will not occur simply based on the definitions of the different regions for $g_{r1}^2$ and $g_{r2}^2$. However these regions could occur in a FDD system where channels are not be reciprocal. 

\subsection{Numerical results}
In this section, we present numerical results to verify the analysis in Theorem \ref{tablethm}. The simulation plots the rate region in Theorem 1 by varying each power parameter subject to the power constraints in (\ref{powerconstraints}) by step sizes of 0.05 with power $P=1$ and link states as in the caption of Fig. 3. At each step, $J_1,J_2,J_3,J_4,$ and $J_5$ are calculated and the resulting (R1,R2) point satisfying (\ref{RateConstraints}) is obtained, where R1, R2 are the rates of user 1 and 2, respectively. The convex hull of all computed points (R1,R2) is taken to obtain the rate region.

%Fig. \ref{fig:simulation}(a) shows that our composite scheme overlaps with the case where user 1 performs only independent coding while user 2 performs only block Markov coding, matching the analytical results from Table 1 for ($\mathcal{R}2,\mathcal{T}4$). Fig. \ref{fig:simulation}(b) shows a nontrivial case where both users perform both block Markov and independent coding in ($\mathcal{R}3,\mathcal{T}4$). Based on the numerical results in both Fig. \ref{fig:simulation}(a) and Fig. \ref{fig:simulation}(b), our optimal hybrid scheme significantly outperforms each coding scheme. Fig \ref{fig:simulation}(b) also illustrates how using both block Markov coding and independent coding outperforms time sharing between user 1 performing only block Markov coding with user 2 performing only independent coding and vice versa. Therefore to obtain the largest rate region, both components are necessary.
%
%Also evident from both Fig. \ref{fig:simulation}(a) and (b) is how drastically the rate region improves when using the relay. The achievable rate region using DT is remarkably smaller than any of the relaying techniques explored in this paper, thus demonstrating the need for intelligent relaying in 5G networks. Further, as demonstrated in Lemma 2, the relay does not necessarily need to use full power to reap the benefits of our composite scheme. This energy efficiency is a critical advantage when considering utilizing idle nodes as relays in 5G networks.

Fig. \ref{fig:simulation} shows a nontrivial case in which both users perform both block Markov and independent coding in ($\mathcal{R}3,\mathcal{T}5$). Based on numerical results in Fig. \ref{fig:simulation}, our composite scheme significantly outperforms each individual coding scheme. Fig \ref{fig:simulation} also illustrates how combining block Markov coding and independent coding outperforms time sharing between user 1 performing only block Markov coding with user 2 performing only independent coding and vice versa. Thus, to obtain the largest rate region, both components are necessary.

Also evident from Fig. \ref{fig:simulation} is how drastically the rate region improves when utilizing the relay. The achievable rate region using direct transmission is remarkably smaller than any of the relaying techniques explored in this paper, thus demonstrating the need for intelligent relaying in 5G networks. 

Furthermore, as demonstrated in Fig. \ref{fig:rPower}, the relay power savings in Lemma 2 are substantial in the regions in which the relay performs only independent coding. In Fig. \ref{fig:rPower}, where the relay is assumed to be on the white dotted line between the two users in Fig. \ref{fig:distSim}, only a fraction of the full relay power is necessary. This energy efficiency is a critical advantage when considering utilizing idle nodes as relays in 5G networks.

\begin{figure}[t] \label{fig5}%do figure* here and at end to span mult cols
    \begin{center}

		\includegraphics[width=0.33\textwidth]{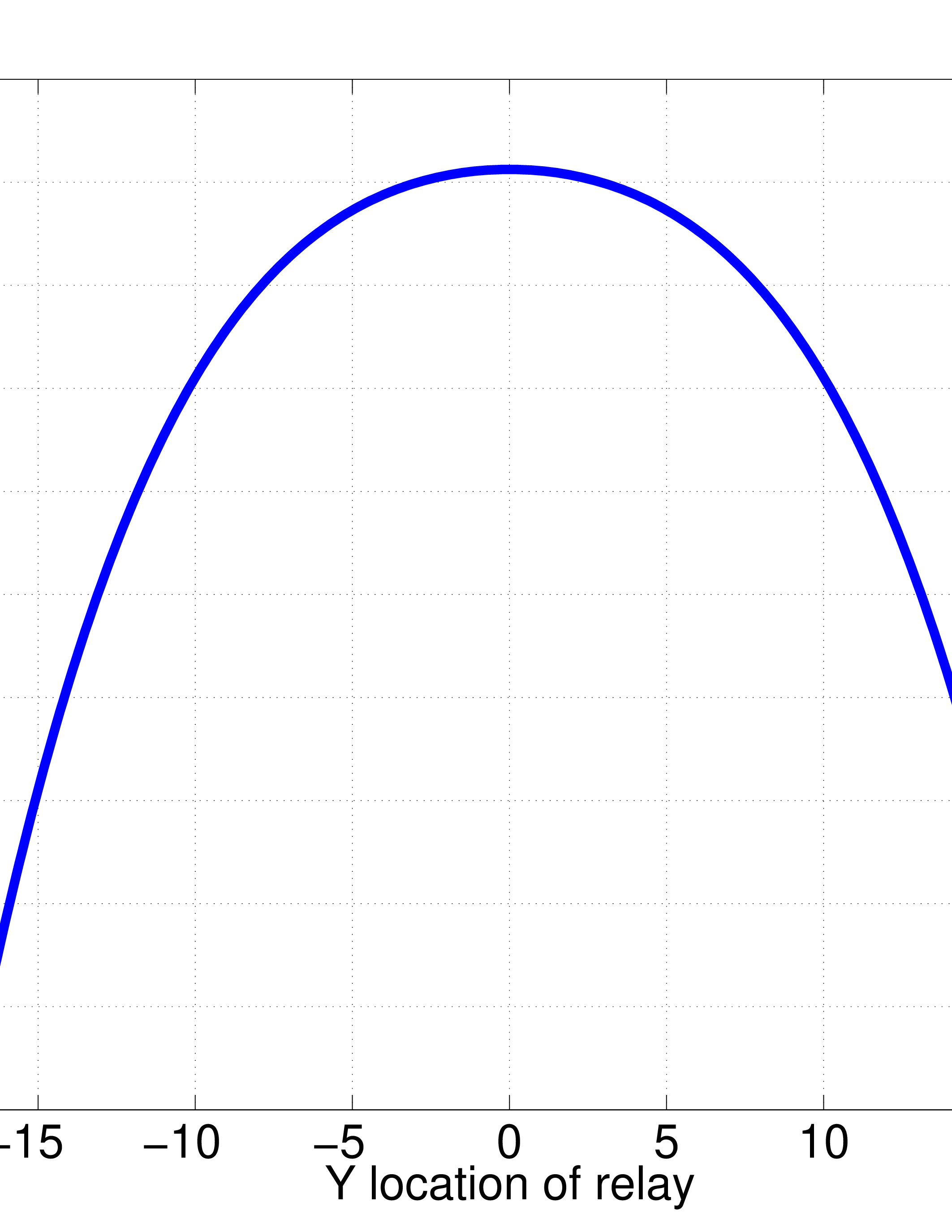}
	
    \caption{Relay power required to implement the composite DF scheme when relay is located on white dotted line in Fig. 2}
    \label{fig:rPower}
    \end{center}
\vspace*{-7mm}
\end{figure}

%
%Our composite scheme designed from a link state perspective not only reveals which coding scheme is optimal in all link regimes but also demonstrates significant improvement in ($\mathcal{R}3$, $\mathcal{T}3$) and ($\mathcal{R}3$, $\mathcal{T}4$) when both independent coding and block Markov coding are necessary to obtain the largest achievable rate. This optimal composite scheme achieves a strictly better rate region than either block Markov DF, independent DF or simple timesharing between these two schemes. 

%%%%%%%%%%%%%%%%%%%%%%%%%%%%%%%%%%%%%%%%%%%%%%%%%%%%%%%%%%%%%%%%%%%%%%%%%%%%%%%%%%%%%%%%%%%%%%%%%%%%%%%%%%%%%%%%%%%%%%%%%%%%%%%%%%%%%%%%%%%
%%%%%%%%%%%%%%%%%%%%%%%%%%%%%%%%%%%%%%%%%%%%%%%%%%%%%%%%%%%%%%%%%%%%%%%%%%%%%%%%%%%%%%%%%%%%%%%%%%%%%%%%%%%%%%%%%%%%%%%%%%%%%%%%%%%%%%%%%%%
\section{Conclusion}\label{sec:conclusion}
By combining independent coding and block Markov coding in a composite TWRC transmission scheme, we achieve a larger rate region than that of either individual technique or timesharing between the two. When independent coding is optimal, the relay does not need to use full power, further motivation for an idle user to employ this relaying strategy in 5G. Both components of the composite DF scheme are necessary, and by combining them optimally we achieve a significant improvement in the rates of both users. We analytically derived the link-state regimes for the optimality of different transmission schemes, an important step towards practical implementation of two-way DF relaying in systems such as 5G cellular.
%%%%%%%%%%%%%%%%%%%%%%%%%%%%%%%%%%%%%%%%%%%%%%%%%%%%%%%%%%%%%%%%%%%%%%%%%%%%%%%%%%%%%%%%%%%%%%%%%%%%%%%%%%%%%%%%%%%%%%%%%%%%%%%%%%%%%%%%%%%
%%%%%%%%%%%%%%%%%%%%%%%%%%%%%%%%%%%%%%%%%%%%%%%%%%%%%%%%%%%%%%%%%%%%%%%%%%%%%%%%%%%%%%%%%%%%%%%%%%%%%%%%%%%%%%%%%%%%%%%%%%%%%%%%%%%%%%%%%%%
\section*{Acknowledgment}
This work has been supported in part by the Office of Naval Research (ONR, Grant N00014-14-1-0645) and National Science Foundation Graduate Research Fellowship Program (NSF, Grant No. DGE-1325256). Any opinions, findings, and conclusions or recommendations expressed in this material are those of the authors and do not necessarily reflect the views of the Office of Naval Research or NSF. Lisa Pinals acknowledges the IEEE Life Member Graduate Study Fellowship.

\appendices
\section{}\label{appendixA}
\subsection{Proof of Theorem 2}
Due to limited space, we will restrict our attention to an interesting representative case of Theorem \ref{tablethm} for the TWRC ($g_{r1}^2>g_{21}^2$ and $g_{r2}^2>g_{12}^2(1+g_{r1}^2P)$) in which user 1 performs only independent coding, while user 2 performs only block Markov coding in link-state regime ($\mathcal{R}2, \mathcal{T}5$). The full analysis will be shown in an upcoming journal article.

From optimization problem (\ref{optimization}), form the Lagrangian as:
%\begin{align}\label{LagrangianGen}
%&\mathcal{L} \triangleq-\mu R_1 - (1-\mu) R_2 - \lambda_1(J_1-R_1) -\lambda_2(J_2-R_1)\nonumber\\ 
%&-\! \lambda_6(P\! - \! \alpha_1\! -\! \beta_1)\! - \! \lambda_7(P \!- \! \alpha_2 \! - \! \beta_2)\! - \! \lambda_8(P\! -\! k_1\alpha_1 \! - \!k_2\alpha_2 \! -\! \beta_3)\nonumber\\
%&-\lambda_3(J_3-R_2)- \lambda_4(J_4-R_2) - \lambda_5(J_5-R_1-R_2) 
%\end{align}
\begin{align}\label{LagrangianGen}
\mathcal{L} \triangleq &-\mu R_1 - (1-\mu) R_2 - \lambda_1(J_1-R_1) -\lambda_2(J_2-R_1)\nonumber\\ 
&-\lambda_3(J_3-R_2)- \lambda_4(J_4-R_2) - \lambda_5(J_5-R_1-R_2) \nonumber\\
&-\lambda_6(P-\alpha_1-\beta_1) - \lambda_7(P-\alpha_2-\beta_2)\nonumber\\
&-\lambda_8(P-k_1\alpha_1-k_2\alpha_2-\beta_3). 
\end{align}
where $\lambda_i\geq0$ are the Lagrange dual variables.
The KKT conditions can then be written as
\begin{align}\label{kktGen}
  & \nabla\mathcal{L}=\lambda_1^\star(J_1-R_1^\star)\! =\!  \lambda_2^\star(J_2-R_1^\star)\! = \!
  \lambda_3^\star(J_3-R_2^\star)\! = \! 0\nonumber\\
   &\lambda_4^\star(J_4-R_2^\star)\! = \! \lambda_5^\star(J_5-R_1^\star-R_2^\star)\! =\! \lambda_6^\star(P-\alpha_1^\star-\beta_1^\star) \! =\! 0\nonumber\\
  &\lambda_7^\star(P-\alpha_2^\star-\beta_2^\star)\! =\! 
  \lambda_8^\star(P-k_1\alpha_1^\star-k_2\alpha_2^\star-\beta_3^\star)\! =\! 0
\end{align}
\noindent where all the primal and dual variables are non-negative.

Here we examine the case $\lambda_1>0; \ \lambda_2,\lambda_3=0; \ \lambda_4, \lambda_5>0$. As such, $R_1^*=J_1$ and $R_2^*=J_4=J_5-J_1$. Using these values for the dual variables, the derivative of the Lagrangian with respect to $\alpha_1$ is:
\begin{align}\label{case1ceq_difshort}\
  \nabla_{\alpha_1}\mathcal{L} &= \lambda_6 + \lambda_8 k_1
\end{align}
\noindent Since user 1 always uses full power (Lemma 1), $\lambda_6>0$. From $\nabla_{\alpha_1}\mathcal{L}$, we immediately see that $\alpha_1^*=0$ and thus user 1 performs independent coding only.
Next we can write the derivatives of the Lagrangian with respect to $k_2$ and $\beta_3$ as:
\begin{align}\label{k2beta3}
  &\nabla_{k_2}\mathcal{L} =\alpha_2 \! \left(\! \lambda_8 \! - \! \lambda_4 \!
  \frac{g_{12}g_{1r}k_2^{-0.5}\! +\! g_{1r}^2}
  {n_1}\!\right), \nabla_{\beta_3}\mathcal{L} =\lambda_8 \! - \!\lambda_4\frac{g_{1r}^2}{n_1}\nonumber\\
    &\text{where }n_1\! = \! g_{12}^2P\! +\! 2g_{12}g_{1r}\sqrt{k_2}\alpha_2\! + \! g_{1r}^2(k_2\alpha_2\! +\! \beta_3)\! +\! 1
\end{align}
Making $\nabla_{k_2}=0$ produces a $\nabla_{\beta_3}\mathcal{L}$ that is always positive. Therefore we must minimize $\beta_3$.

From the rate constraints in this case, $R_1^*=J_1\le J_2$. Since user 1 performs independent coding ($\alpha^*_1=0, \beta^*_1=P$), using (\ref{RateConstraints2}) we can write this inequality as:
\begin{align}\label{gr1bounds}
g_{r1}^2&\le g_{21}^2+g_{2r}^2\frac{\beta_3}{P}, \text{ meaning }\beta_3\ge\frac{g_{r1}^2-g_{21}^2}{g_{2r}^2}P
\end{align}
However, from the KKT conditions, $\beta_3$ must be minimized. Therefore (\ref{gr1bounds}) is satisfied with equality for $\beta_3$. Note that we obtained the value of $\beta_3$ using only rate constraints for user 1. Therefore, user 2 only performs block Markov coding. Since $\beta_3 \in[0,P]$, from (\ref{gr1bounds}) $g_{r1}^2$ is bounded from above as: 
\begin{align}\label{gr1compbounds}
g_{r1}^2\le g_{21}^2+g_{2r}^2
\end{align}
Finally, because $\lambda_4\!>\!0$ and $\lambda_5\!>\!0$ we must have $J_4\!=\!J_5-J_1$. Simplify noting that $k_2\alpha_2+\beta_3=P$ and $\alpha_1=0$:
\begin{align}\label{j4eqj5j1}
 g_{r2}^2\!&=\!(g_{12}^2\!\frac{P}{\beta_2}\!+\!\frac{2 g_{12}g_{1r} \sqrt{k_2} (P\!-\!\beta_2)}{\beta_2}\!+\!\frac{g_{1r}^2 P}{\beta_2})(1\!+\!g_{r1}^2\!P)
\end{align}
Since this equation is satisfied with equality, we know  $g_{r2}^2$ is greater than the minimum of the right hand side and less than the maximum of (\ref{j4eqj5j1}) over the range of $\beta_2$. With $\beta_2\rightarrow 0$, the maximum is infinity, and hence always satisfied. The minimum occurs when $\beta_2=P$ and we obtain a lower bound for $g_{r2}^2$:
\begin{align}\label{bounds1cgr2}
g_{r2}^2 &\ge (g_{12}^2 + g_{1r}^2)(1+g_{r1}^2P)
\end{align}
Since we know the value for $\beta_3$ from (\ref{gr1bounds}), we can write (\ref{j4eqj5j1}) in terms of $\alpha_2$ and $k_2$ using $\alpha_2+ \beta_2\! =\! P$. Then there are two unknowns ($\alpha_2, k_2$) and two equations: $J_4\! =\! J_5\! -\! J_1$ and $k_2\alpha_2\! +\! \beta_3\! =\! P$ to solve for the missing power allocation parameters. 

\subsection{Proof of Lemma 2}
Examine the case with dual variables $\lambda_1>0$; $\lambda_2,\lambda_3,\lambda_4\!=\!0; \ \lambda_5\!>\!0$ in (\ref{optimization}). The derivative of the Lagrangian (\ref{LagrangianGen}) with respect to $\alpha_1$ is (\ref{case1ceq_difshort}) and user 1 performs independent coding for the same reason as the case considered in Appendix A. The bound for $g_{r1}^2$ in (\ref{gr1compbounds}) applies, as does $\beta_3$ in (\ref{gr1bounds}). The derivative of the Lagrangian with respect to $\alpha_2$ is 
\begin{align}
\nabla_{\alpha_2}\mathcal{L} =\lambda_7 + \lambda_8 k_2
\end{align}
and we immediately see that user 2 also performs independent coding. Since $J_5-J_1 \leq J_4$,
\begin{align}\label{case1a}
g_{r2}^2 &\le (g_{12}^2  + g_{1r}^2 \frac{\beta_3}{P})(1+g_{r1}^2P)
\end{align}
Solving (\ref{case1a}) yields another expression for $\beta_3$, thus $\beta_3$ must satisfy both (\ref{case1a}) and (\ref{gr1bounds}), leading to the optimal $\beta_3$ in (\ref{beta3}). Thus, since $\alpha_1=\alpha_2=0$, the relay's power is equal to $\beta_3$, and in general is less than P when both users perform independent coding in link regimes  ($\mathcal{R}2, \mathcal{T}3$) and  ($\mathcal{R}2, \mathcal{T}4$).

%In summary, based on the channel conditions in (\ref{gr1compbounds}) and (\ref{bounds1cgr2}), we are in region ($\mathcal{R}2,\mathcal{T}4$) and user 1 performs independent coding only while user 2 performs block Markov coding only.
%%%%%%%%%%%%%%%%%%%%%%%%%%%%%%%%%%%%%%%%%%%%%%%%%%%%%%%%%%%%%%%%%%%%%%%%%%%%%%%%%%%%%%%%%%%%%%%%%%%%%%%%%%%%%%%%%%%%%%%%%%%%%%%%%%%%%%%%%%%
%%%%%%%%%%%%%%%%%%%%%%%%%%%%%%%%%%%%%%%%%%%%%%%%%%%%%%%%%%%%%%%%%%%%%%%%%%%%%%%%%%%%%%%%%%%%%%%%%%%%%%%%%%%%%%%%%%%%%%%%%%%%%%%%%%%%%%%%%%%
\bibliographystyle{IEEEtran}
\bibliography{references}
\end{document}